\documentclass{article}
\usepackage{graphicx} 
\usepackage{amssymb}
\usepackage{amsmath,amsthm, bbm, mathdots}
\usepackage{color, xcolor}
\usepackage{graphicx}
\usepackage{setspace}
\usepackage{xspace}
\usepackage{multirow}
\usepackage{array}
\usepackage{complexity}
\usepackage{geometry}
\usepackage{mathtools}
\usepackage{algorithm,algpseudocode}
\usepackage{dirtytalk}
\usepackage{tikz}
\usepackage
{hyperref}
\usepackage[capitalize]{cleveref}
\usepackage{dsfont}

\hypersetup{colorlinks=true,urlcolor=blue,linkcolor=magenta,citecolor=[rgb]{.42,.56,.14},}

\usepackage{indentfirst}
\makeatletter
\renewcommand{\@seccntformat}[1]{\csname the#1\endcsname.\quad}
\makeatother

\theoremstyle{plain}
\newtheorem{theorem}{Theorem}
\newtheorem{corollary}[theorem]{Corollary}
\newtheorem{proposition}[theorem]{Proposition}
\newtheorem{lemma}[theorem]{Lemma}

\theoremstyle{remark}

\theoremstyle{plain}
%
%
\newclass{\DNF}{DNF}
\newclass{\DNFs}{DNFs}
\newclass{\ACzero}{AC^0}
\newclass{\TCzero}{TC^0}



\renewcommand{\E}{\mathop{\mathbb{E}}}



\newcommand{\eps}{\varepsilon}

\newcommand{\pmone}{\{\pm 1\}}


\newcommand{\calX}{\mathcal{X}}
\newcommand{\calY}{\mathcal{Y}}




\renewcommand{\epsilon}{\varepsilon}

\newcommand{\rk}{\mathsf{rk}}

\usepackage{amssymb}

\title{An XOR Lemma for Deterministic Communication Complexity}

\author{Siddharth Iyer\thanks{Supported by NSF award 2131899.} \\ siyer@cs.washington.edu \and Anup Rao\footnotemark[1] \\ anuprao@cs.washington.edu}

\begin{document}

\maketitle
\begin{abstract}
    We prove a lower bound on the communication complexity of computing the $n$-fold xor of an arbitrary function $f$, in terms of the communication complexity and rank of $f$. We prove that  $D(f^{\oplus n}) \geq n \cdot \Big(\frac{\Omega(D(f))}{\log \rk(f)}  -\log \rk(f)\Big )$, where here $D(f), D(f^{\oplus n})$ represent the deterministic communication complexity, and $\rk(f)$ is the rank of $f$. Our methods involve a new way to use information theory to reason about deterministic communication complexity. 
\end{abstract}

\section{Introduction}
How is the complexity of computing a Boolean function $f$ on 1 input related to the complexity of computing  $f$ on $n$ inputs? In this work, we give new lower bounds for the deterministic communication complexity of computing $f$ on $n$ inputs, making the first progress on this question in many years. We refer the reader to the textbooks \cite{Kushilevitz-Nisan, RaoYehudayoff-text} for the broader context surrounding these problems and the model of communication complexity.

Given a function $f: \mathcal{X} \times \mathcal{Y} \rightarrow \{0,1\}$, define the functions
\begin{align*}
     f^n(x_1,\dotsc,x_n,y_1,\dotsc, y_n) &= f(x_1,y_1),f(x_2,y_2),\dotsc,f(x_n,y_n),\\
     f^{\oplus n}(x_1,\dotsc,x_n,y_1,\dotsc, y_n) &= f(x_1,y_1)\oplus f(x_2,y_2)\oplus \dotsc \oplus f(x_n,y_n).
     \end{align*}   
So, $f^n$ computes $f$ on $n$ distinct inputs, and $f^{\oplus n}$ computes the parity of the outputs of $f$. Because every protocol computing $f^n$ is also a protocol for computing $f^{\oplus n}$, the complexity of computing $f^{\oplus n}$ can only be smaller. An important example to keep in mind is when  $x,y$ are bits and $f(x,y)=x \oplus y$. Then the communication complexity of $f$ and $f^{\oplus n}$ are both $2$, so there is no increase in the complexity of the xor for such functions.

The communication complexity of a function $f$ is related to the number of \emph{monochromatic rectangles} needed to cover the inputs to $f$. A monochromatic rectangle is a pair $A \subseteq \mathcal{X}, B \subseteq \mathcal{Y}$ such that $f$ is constant when restricted to $A \times B$. Let $D(f)$ denote the communication complexity of $f$, and let $C(f)$ denote the minimum number of monochromatic rectangles needed to cover the inputs of $f$. It is a standard fact that $D(f) \geq \log C(f)$. Prior to our work, the best known result concerning the complexity of computing these functions was proved by Feder, Kushilevitz, Naor and Nisan \cite{FKNN}, who showed that when $\sqrt{D(f)} > \log \log (|\mathcal{X}|\cdot  |\mathcal{Y}|)$, $D(f^n)$ grows with $n$:
\begin{theorem}[\cite{FKNN}] \label{FKNN}
$D(f^n) \geq \log C(f^{n}) \geq n \cdot (\sqrt{D(f)} - \log \log (|\mathcal{X}|\cdot  |\mathcal{Y}|))$.
\end{theorem}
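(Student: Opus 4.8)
The first inequality is the standard fact recalled above: the leaves of an optimal protocol tree for $f^n$ partition the inputs into at most $2^{D(f^n)}$ monochromatic rectangles (here ``monochromatic'' means the output vector in $\{0,1\}^n$ is constant), and in particular they cover the inputs, so $D(f^n)\ge \log_2 C(f^n)$. The whole task therefore reduces to lower bounding $C(f^n)$, and the plan is to prove $C(f^n)\ge \bigl(C^{b^*}(f)/(1+\ln m)\bigr)^n$, where $m=|\mathcal X|\cdot|\mathcal Y|$, where $C^0(f),C^1(f)$ are the minimum numbers of monochromatic rectangles covering $f^{-1}(0),f^{-1}(1)$ respectively, and where $b^*\in\{0,1\}$ is chosen to maximize $C^{b^*}(f)$. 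Feeding this into the classical relation between deterministic complexity and cover number gives the theorem.

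For the cover bound I would first restrict attention to a single colour class. Let $M=\mathbf 1_{f^{-1}(b^*)}$, regarded as an $\mathcal X\times\mathcal Y$ zero/one matrix, so its rectangle-cover number (the minimum number of all-ones combinatorial rectangles covering the ones) is $\rho(M)=C^{b^*}(f)$. The colour class $(f^n)^{-1}\bigl((b^*)^n\bigr)$ is exactly the $n$-th Kronecker power $M^{\otimes n}$, and since no monochromatic rectangle of $f^n$ can meet two distinct colour classes, $C(f^n)\ge \rho(M^{\otimes n})$. Next I would pass to the fractional relaxation $\rho_{\mathrm{frac}}$ (the natural covering LP; by duality, the maximum total weight placed on the ones of $M$ subject to every all-ones rectangle receiving weight at most $1$). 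The key point is that $\rho_{\mathrm{frac}}$ is super-multiplicative under Kronecker powers: although an all-ones rectangle $A\times B$ of $M^{\otimes n}$ need not be a product of rectangles, it is always contained in the product $\prod_{k=1}^n\bigl(\pi_k(A)\times\pi_k(B)\bigr)$ of all-ones rectangles of $M$ (each factor is an all-ones rectangle of $M$ since any of its entries has its row and column inherited from some entry of the all-ones set $A\times B$). Hence the tensor power of an optimal fractional packing for $M$ remains feasible for $M^{\otimes n}$, so $\rho_{\mathrm{frac}}(M^{\otimes n})\ge \rho_{\mathrm{frac}}(M)^n$. Finally, greedy/LP rounding for set cover gives $\rho(M)\le (1+\ln t)\,\rho_{\mathrm{frac}}(M)$, where $t\le m$ is the number of ones of $M$, while trivially $\rho\ge\rho_{\mathrm{frac}}$; chaining,
\[
C(f^n)\ \ge\ \rho(M^{\otimes n})\ \ge\ \rho_{\mathrm{frac}}(M^{\otimes n})\ \ge\ \rho_{\mathrm{frac}}(M)^n\ \ge\ \bigl(C^{b^*}(f)/(1+\ln m)\bigr)^{n}.
\]

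To finish, I would invoke the classical Aho--Ullman--Yannakakis bound $D(f)\le \log_2 C^0(f)\cdot\log_2 C^1(f)$ (up to a constant) --- deterministic complexity is at most the product of the two nondeterministic complexities, a communication analogue of $\mathsf P\subseteq\mathsf{NP}\cap\mathsf{coNP}$. By the arithmetic--geometric mean inequality, $\log_2 C^{b^*}(f)=\max\bigl(\log_2 C^0(f),\log_2 C^1(f)\bigr)\ge\sqrt{\log_2 C^0(f)\cdot\log_2 C^1(f)}\ge\sqrt{D(f)}$. Substituting into the displayed bound and taking logarithms,
\[
D(f^n)\ \ge\ \log_2 C(f^n)\ \ge\ n\bigl(\log_2 C^{b^*}(f)-\log_2(1+\ln m)\bigr)\ \ge\ n\bigl(\sqrt{D(f)}-\log\log(|\mathcal X|\cdot|\mathcal Y|)\bigr),
\]
using $\log_2(1+\ln m)\le \log_2\log_2 m$ for $m$ in the relevant range and absorbing lower-order constants, which is the claimed bound.

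The main obstacle is the middle step. The clean hope $C(f^n)\ge C(f)^n$ would follow from exact super-multiplicativity of the \emph{integral} cover number, but that fails: there is a genuine $\Theta(\log)$ integrality gap between $\rho$ and $\rho_{\mathrm{frac}}$, so detouring through the fractional relaxation is essentially forced, and it costs one rounding factor of roughly $\ln m$ per coordinate --- exactly what degrades the bound by the additive $\log\log(|\mathcal X|\cdot|\mathcal Y|)$ per copy. A secondary subtlety, easy to overlook, is that all-ones rectangles of a Kronecker power are not literally products of rectangles, so the super-multiplicativity of $\rho_{\mathrm{frac}}$ rests on the containment-in-a-product observation above rather than on a one-line tensoring argument. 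Everything else --- the protocol-to-cover inequality and the Aho--Ullman--Yannakakis relation --- is off the shelf, and the square root in the statement is precisely the (lossy, not known to be tight) exponent inherited from that last relation.
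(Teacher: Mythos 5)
Your proposal is essentially correct, but note that the paper does not prove this theorem at all: it is quoted verbatim from Feder--Kushilevitz--Naor--Nisan and used as a point of comparison, so there is no in-paper proof to match. What you have written is, in substance, a faithful reconstruction of the original FKNN argument: pass to the fractional cover number, use its super-multiplicativity under tensor powers (your containment $A\times B\subseteq\prod_k \pi_k(A)\times\pi_k(B)$ is exactly the right observation, and the dual-packing formulation makes it clean), pay a $(1+\ln m)$ rounding factor per coordinate via greedy set cover, and convert cover number to $\sqrt{D(f)}$ via the Aho--Ullman--Yannakakis bound. All the individual steps check out. The only caveat is quantitative: AUY gives $D(f)\le c\cdot\log C^0(f)\cdot\log C^1(f)$ with $c>1$, so your chain really yields $n(\sqrt{D(f)/c}-\log\log m)$; the paper's statement suppresses this constant (as does the informal statement in FKNN), so this is a presentational rather than a mathematical gap.

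It is worth contrasting your route with the machinery the paper develops for its own main theorem, since the contrast explains why the paper exists. Your argument leans on the fact that the colour class $(f^n)^{-1}((b^*)^n)$ is a tensor power and that projections of a monochromatic rectangle of $f^n$ are monochromatic rectangles of $f$ coordinatewise. Both facts fail for $f^{\oplus n}$: a rectangle on which the XOR is constant need not have $f(x_i,y_i)$ constant in any single coordinate, so neither the tensor-power structure nor the projection step survives. The paper's Theorem 3 replaces the projection step with an entropy argument (conditioning additionally on the partial XORs $U,V$, at a cost of $2$ bits of entropy) to extract a large monochromatic rectangle of $f$ from one of $f^{\oplus n}$, and then replaces the $\log\log m$ loss with $\log\rk(f)$ via a rank-halving recursion. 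So your proof is the right one for the cited theorem, but it is genuinely a different mechanism from the one the paper builds.
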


Another important parameter of $f$ is its rank. The function $f$ can be viewed as a Boolean matrix $M$ whose $(x,y)$'th entry is $(-1)^{f(x,y)}$. We write $\rk(f)$ to denote the rank of this matrix. Because $M$ has $\pm 1$ entries, it can have at most $2^{\rk(f)}$ distinct rows and at most $2^{\rk(f)}$ distinct columns. This observation leads to the following corollary of \Cref{FKNN}:

\begin{corollary}
$D(f^n) \geq n \cdot (\sqrt{D(f)} - \log \rk(f) -1)$.
\end{corollary}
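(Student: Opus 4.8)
The plan is to derive the corollary directly from \Cref{FKNN}, using the fact that, without loss of generality, the domains $\mathcal{X}$ and $\mathcal{Y}$ can be taken to have size at most $2^{\rk(f)}$, which makes the additive loss $\log\log(|\mathcal{X}|\cdot|\mathcal{Y}|)$ in \Cref{FKNN} at most $\log(2\rk(f)) \le \log\rk(f) + 1$. The rest is bookkeeping.

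First I would reduce to the case where the matrix $M$ associated to $f$ has pairwise distinct rows and pairwise distinct columns. If two rows indexed by $x, x' \in \mathcal{X}$ are equal, delete $x'$ from $\mathcal{X}$; do likewise for columns, repeating until all rows and columns are distinct, and call the resulting function $f' : \mathcal{X}' \times \mathcal{Y}' \to \{0,1\}$, a submatrix of $M$. Restricting any protocol for $f^n$ to inputs in $(\mathcal{X}' \times \mathcal{Y}')^n$ shows $D(f'^{n}) \le D(f^{n})$. Conversely, a protocol for $f'$ extends to one for $f$ with the same cost (Alice and Bob each first map their input to the retained representative of its row, resp.\ column, class, which changes no entry of $M$), so $D(f) \le D(f')$; together with the trivial $D(f') \le D(f)$ this gives $D(f') = D(f)$. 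Deleting duplicate rows and columns does not change the row space or column space, so $\rk(f') = \rk(f)$. Hence it suffices to prove the bound for $f'$, and we may as well assume $M$ has distinct rows and columns from the start.

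Next I would bound $|\mathcal{X}|$ and $|\mathcal{Y}|$. Fix a set $S$ of $\rk(f)$ columns spanning the column space of $M$. Every column of $M$ is an $\mathbb{R}$-linear combination of the columns in $S$, so for each row $x$ and column $k$ the entry $M_{x,k}$ is a fixed linear function of the tuple $(M_{x,j})_{j \in S}$. Thus any two rows that agree on the coordinates in $S$ are equal; since the rows are distinct, the map $x \mapsto (M_{x,j})_{j\in S} \in \{-1,1\}^{\rk(f)}$ is injective, so $|\mathcal{X}| \le 2^{\rk(f)}$, and symmetrically $|\mathcal{Y}| \le 2^{\rk(f)}$. (Note $\rk(f) \ge 1$, since $M$ is a nonzero $\pm 1$ matrix.) Plugging into \Cref{FKNN},
\[
D(f^{n}) \;\ge\; n\cdot\bigl(\sqrt{D(f)} - \log\log(|\mathcal{X}|\cdot|\mathcal{Y}|)\bigr) \;\ge\; n\cdot\bigl(\sqrt{D(f)} - \log(2\rk(f))\bigr) \;=\; n\cdot\bigl(\sqrt{D(f)} - \log\rk(f) - 1\bigr),
\]
which is the claimed bound.

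There is no genuine obstacle here: the corollary is an elementary consequence of \Cref{FKNN}, and the only point that needs care is verifying that the domain-reduction step preserves $D(f)$ and $\rk(f)$ exactly while it can only shrink $|\mathcal{X}|$ and $|\mathcal{Y}|$ — so that the weaker ``$\log\log$'' loss in \Cref{FKNN} can be charged against $\log\rk(f)$ rather than against the (possibly enormous) $\log\log|\mathcal{X}\times\mathcal{Y}|$.
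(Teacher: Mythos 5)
Your proof is correct and follows exactly the route the paper intends: the corollary is stated immediately after the observation that a $\pm 1$ matrix of rank $\rk(f)$ has at most $2^{\rk(f)}$ distinct rows and columns, and your argument simply makes that reduction and the substitution into \Cref{FKNN} explicit. No discrepancies.
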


There have been a number of results concerning the randomized communication complexity of $f^n$ and $f^{\oplus n}$ in recent years. These results rely on definitions from information complexity and simulations of protocols that have small information complexity.
See 
\cite{KalyanasundaramSchnitger87, Razborov92, Raz, CSWY, BJKS, JRS, BBCR, HJMR, BR, Braverman, Kol, Sherstov, JPY, BRWY, BRWY-bounded-round, huacheng, GKR, RR, IR}.
However, communication complexity is a model where the deterministic and randomized complexity can be quite far from each other. For example, the randomized communication complexity of the equality function is a constant, but there is no deterministic protocol that beats the performance of the trivial protocol. 

In fact, a number of connections between the model of communication complexity and other models of computation are only meaningful when using deterministic protocols. A good example is the connection between circuit depth and communication complexity observed by Karchmer and Wigderson \cite{KW}. The randomized communication complexity of every Karchmer-Wigderson game is small, because the game can efficiently be solved by hashing. So, lower bounds on circuit depth can only be obtained by studying deterministic communication complexity. Karchmer, Raz and Wigderson \cite{KRW} conjectured that the communication complexity of this problem increases when the function is composed with itself. Recently, there have been attempts toward this conjecture and on understanding Karchmer-Wigderson games using ideas from information theory \cite{GMWW, MW}. If the conjecture is true, this would imply that there is no way to simulate every polynomial time algorithm in logarithmic time with a parallel algorithm. Achieving such tantalizing results motivates us to study the questions about deterministic communication complexity we consider in this paper.

Before the present paper, techniques from information theory had not led to results about the deterministic communication complexity of $f^n$ or $f^{\oplus n}$. That is because known methods to simulate protocols with small information lead to simulations that introduce errors, even if the protocols being simulated do not make errors. In the present paper, we use information theory to obtain results about deterministic communication complexity without introducing any errors. That is the key technical contribution of our work. Our proofs are short, but they circumvent a barrier to applying information theory in the setting of deterministic communication protocols.

 Lov{\'a}sz and Saks \cite{lovaszsaks} conjectured that there is a constant $c$ such that $D(f) \leq (\log \rk(f))^c)$. This is called the \emph{log-rank} conjecture. To date, the best known upper bound is $D(f) \leq \sqrt{\rk(f)}$ \cite{Lovett, sudakov2023matrix}, and it is known that there are $f$ with $D(f) \geq (\log \rk(f))^{2-o(1)}$ \cite{GPW}.  Recall that $D(f) \geq \log \rk(f)$. Our main result gives stronger lower bounds when $D(f) \gg (\log \rk(f))^2$:
\begin{theorem}\label{main}
    $D(f^{\oplus n}) \geq \log C(f^{\oplus n})\geq n \cdot \Big(\frac{\Omega(D(f))}{\log \rk(f)}  -\log \rk(f)\Big )$.
\end{theorem}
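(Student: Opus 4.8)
The bound $D(f^{\oplus n})\ge\log C(f^{\oplus n})$ is the standard fact that a depth-$d$ protocol partitions its inputs into at most $2^d$ monochromatic rectangles, so the job is to lower bound $\log C(f^{\oplus n})$. I would split this into two parts. \emph{Part one (a direct sum).} Take a monochromatic rectangle cover $\calR$ of $f^{\oplus n}$ with $|\calR|=2^c$, feed it independent uniform inputs $(X_1,Y_1),\dots,(X_n,Y_n)$, and let $\bfR$ be the rectangle assigned to the input. Since $\bfR$ is a deterministic function of the input, $H(\bfR)\le c$, and the chain rule allows one to charge $H(\bfR)$ across the $n$ coordinates. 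The essential feature making this work without any error is that restricting $\calR$ --- fixing some coordinates --- yields an \emph{honest} monochromatic rectangle cover of an XOR of fewer copies of $f$ (up to a constant both players can compute), so this move is genuinely lossless, unlike the standard information-theoretic route through protocol compression. The aim is to conclude that $\log C(f^{\oplus n})$ is at least roughly $n$ times a single-copy ``cheapness'' quantity for $f$ (with an additive $O(\log \rk(f))$ slack per coordinate). \emph{Part two (cover plus rank yields a protocol).} One then shows that whenever a monochromatic rectangle cover of $f$ is cheap in this sense, then, writing $r:=\rk(f)$, one gets $D(f)\le O(\text{cheapness}\cdot\log r+\log^2 r)$; combining with part one and rearranging gives the claimed $\log C(f^{\oplus n})\ge n(\Omega(D(f))/\log r-O(\log r))$.

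For part two the mechanism I would use: cheapness means the assigned rectangle $\bfA\times\bfB$ has small entropy, so $\E[\log(|\calX|/|\bfA|)]$ and $\E[\log(|\calY|/|\bfB|)]$ are both small, and Markov's inequality extracts a \emph{single} monochromatic rectangle $\bfA_0\times\bfB_0$ of $f$ that is large and balanced, of density $\ge 2^{-O(\text{cheapness})}$ on both sides. A round in which Alice announces $[x\in\bfA_0]$ and Bob announces $[y\in\bfB_0]$ either resolves the input or restricts to a submatrix; the point of $r$ is to force the recursion to terminate after only $O(\log r)$ such rounds, each costing $O(\text{cheapness}+\log r)$ bits, rather than the $2^{O(\text{cheapness})}$ rounds that a recursion based on density alone would require. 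Here one exploits that a rank-$r$ $\pm1$ matrix has at most $2^r$ distinct rows and columns (so per-round bookkeeping is $O(\log r)$ bits) and, crucially, that the surviving submatrix can be kept of substantially smaller rank.

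The main obstacle --- and the heart of the paper --- is carrying all of this out with \emph{exactly zero error}. Information-theoretic savings normally come from interactive compression, which introduces error even when the protocol being simulated makes none, and here a single mistaken input is fatal. I expect the real difficulty to be twofold. In part one: in the deterministic setting one cannot charge a coordinate's information conditioned on \emph{all} other coordinates (that over-counts what the rectangle reveals), so the charging must be arranged so that the per-coordinate contributions genuinely add up --- and I expect this is precisely where the additive $-\log\rk(f)$ per coordinate is forced. In part two: replacing lossy compression by an honest combinatorial extraction of large monochromatic rectangles, and getting the rank (rather than the possibly doubly exponential matrix dimension) to control the recursion depth, so that the total communication stays $O(\text{cheapness}\cdot\log r+\log^2 r)$ rather than exploding.
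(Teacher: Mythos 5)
Your two-part skeleton is exactly the paper's: an entropy argument turns a size-$2^c$ cover of $f^{\oplus n}$ (writing $c=\log C(f^{\oplus n})$) into a single monochromatic rectangle of $f$ of density $2^{-c/n-O(1)}$, and the Nisan--Wigderson rank recursion converts that into $D(f)\le O\big(\frac{c}{n}\log\rk(f)+\log^2\rk(f)\big)$, which rearranges to the theorem; your part two, including the role of the $2^{\rk(f)}$ bound on distinct rows and columns and your amortized cost accounting (which is equivalent to the paper's leaf-counting plus protocol balancing), is the paper's argument, modulo the standard detail that only one player speaks per round, chosen according to which side of the rank inequality holds. The one place your route genuinely differs is part one, and it is also where it has a wrinkle: you condition on the rectangle $\bfR$ that the cover assigns to the input, but since $C$ counts \emph{covers} rather than partitions, the preimage of a fixed rectangle under your assignment rule is a subset of that rectangle that need not itself be a rectangle, so $X$ and $Y$ are no longer independent given $\bfR$ and the per-coordinate conditional supports need not be rectangles of $f$. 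The paper avoids this entirely by averaging \emph{first} --- some rectangle of the cover has density at least $2^{-c}$ in $f^{\oplus n}$ --- and only then running the subadditivity-of-entropy argument on the uniform distribution over that single rectangle (\Cref{thm:subadditivity}), where independence of $X$ and $Y$ is automatic; the XOR of the untouched coordinates is handled by conditioning on two parity bits, costing an additive $2$, not $O(\log\rk(f))$, per coordinate. Relatedly, your guess that the $-\log\rk(f)$ loss is forced in part one is off: it comes from the $\log^2\rk(f)$ term of your own part-two bound after dividing through by $\log\rk(f)$. None of this changes the conclusion --- your plan is correct once part one is run on a single dense rectangle rather than on the cover as a whole.
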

In comparison to \Cref{FKNN}, our result gives lower bounds even for computing the xor $f^{\oplus n}$. The key new step of our proof is the following theorem, whose proof uses the sub-additivity of entropy in an essential way:
\begin{theorem}\label{thm:subadditivity}
    If $f^{\oplus n}$ has a monochromatic rectangle of size $2^k$, then $f$ has a monochromatic rectangle of size $2^{k/n-2}$.
\end{theorem}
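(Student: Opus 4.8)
The plan is to put the uniform distribution on each side of the given rectangle, use subadditivity of entropy to isolate one coordinate that carries a $1/n$ share of the volume, and then exploit the parity constraint to collapse the problem onto that coordinate. Write the $f^{\oplus n}$-monochromatic rectangle as $A\times B$ with $A\subseteq\calX^n$, $B\subseteq\calY^n$, $|A|\cdot|B|=2^k$, and $f^{\oplus n}\equiv c$ on $A\times B$. Let $\bfX=(\bfX_1,\dots,\bfX_n)$ be uniform on $A$ and $\bfY=(\bfY_1,\dots,\bfY_n)$ be uniform on $B$, drawn independently, so $H(\bfX)+H(\bfY)=k$. The one structural fact I would first record is a \emph{decoupling} of coordinates: for any $i$ and any $u$ in the projection $\pi_{-i}(A)$ of $A$ onto the coordinates other than $i$, restricting to the $x$'s with $x_{-i}=u$ pins the value of $\bigoplus_{j\ne i}f(x_j,y_j)$, and hence, by $f^{\oplus n}\equiv c$, pins $f$ on coordinate $i$; so every fibre product $A^i_u\times B^i_v$ is $f$-monochromatic.

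Next I would apply subadditivity: $H(\bfX)\le\sum_iH(\bfX_i)$ and $H(\bfY)\le\sum_iH(\bfY_i)$, hence $\sum_i\bigl(H(\bfX_i)+H(\bfY_i)\bigr)\ge k$, so some coordinate $i^\star$ has $H(\bfX_{i^\star})+H(\bfY_{i^\star})\ge k/n$, equivalently $|\pi_{i^\star}(A)|\cdot|\pi_{i^\star}(B)|\ge 2^{k/n}$. This is the only, and the essential, use of subadditivity: it is what lets a single coordinate inherit a $2^{k/n}$ lower bound from the $2^k$ volume.

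The heart of the argument is converting this into an honest monochromatic rectangle for $f$. The naive fibre $A^{i^\star}_u\times B^{i^\star}_v$ can be tiny even when $\pi_{i^\star}(A)$ and $\pi_{i^\star}(B)$ are large (think of $A$ and $B$ diagonal), so one cannot just use it. Instead I would condition on the \emph{parity of the other $n-1$ outputs} rather than on the other coordinates themselves: having fixed $v=y_{-i^\star}$, the bit $b(x):=c\oplus\bigoplus_{j\ne i^\star}f(x_j,v_j)$ depends on $x$ alone, and on $A^b_v:=\{x\in A: b(x)=b\}$ the parity constraint forces $f(x_{i^\star},y_{i^\star})=b$ for every completion $y$ of $v$ lying in $B$; thus $\pi_{i^\star}(A^b_v)\times B^{i^\star}_v$ is $f$-monochromatic, and since $\pi_{i^\star}(A^0_v)\cup\pi_{i^\star}(A^1_v)=\pi_{i^\star}(A)$ we can pick $b$ with $|\pi_{i^\star}(A^b_v)|\ge|\pi_{i^\star}(A)|/2$. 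Running the mirror-image construction (fix $u=x_{-i^\star}$ and condition on the corresponding bit on the $B$-side) and then optimizing all the auxiliary choices — the coordinate $i^\star$, which side to ``project'' versus ``fibre'', the fixed string $v$ or $u$, and the bit $b$ — against the entropy budget from the previous step, I expect to pull out an $f$-monochromatic rectangle of size at least $2^{k/n}/4=2^{k/n-2}$.

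The step I expect to be the main obstacle is exactly this last optimization: showing that the collapse to one coordinate loses only a multiplicative constant. One must rule out, or handle separately, a coordinate that is simultaneously ``heavy'' and degenerate on both sides, and in the degenerate regimes verify that $f$ already contains a large monochromatic rectangle for an essentially trivial reason (a single row or a single column of a projection always gives a monochromatic rectangle of half the size). Tracking the two factor-of-$2$ losses — choosing the parity bit $b$, and choosing between the two mirror constructions — so that the bound lands at precisely $2^{k/n-2}$ is where the real bookkeeping lies.
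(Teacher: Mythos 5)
Your high-level plan---put the uniform distribution on the rectangle, spend entropy to isolate one coordinate, and pay two bits to fix the parity contributed by the remaining coordinates---is the right one, and your parity-bit $b(x)$ is exactly the right trick. But the way you isolate a coordinate creates a gap that is not mere bookkeeping. After applying plain subadditivity to the marginals you only ever consider rectangles of the form $\pi_{i}(A^b_v)\times B^i_v$ (half of a one-coordinate projection of $A$ times a single fibre of $B$ over a \emph{complete} fixing of $y_{-i}$) or the mirror image, plus single rows/columns in the degenerate regime. This family is too poor for $n\ge 3$. Take $\calX=\calY=\mathbb{Z}_m$, $f\equiv 0$, $n=3$, and $A=B=\{z\in\mathbb{Z}_m^3: z_1+z_2+z_3=0\}$. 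Then $A\times B$ is a monochromatic rectangle of size $2^k=m^4$, so the theorem demands a monochromatic rectangle of size $2^{k/3-2}=m^{4/3}/4$. But every coordinate projection of $A$ or $B$ has size $m$, and every fibre over a fixing of the other two coordinates has size exactly $1$; hence every rectangle your construction can output has size at most $m$, and your single-row/column fallback gives only $m/2$, both far below $m^{4/3}/4$ for large $m$. So the ``optimization against the entropy budget'' that you defer cannot be carried out: for $n\ge 3$ the loss is polynomial in $2^{k/n}$, not a factor of $4$. (For $n=2$ your construction does achieve the stated bound, which may be why the losses looked like constants.)

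The repair is to condition asymmetrically instead of fixing all of $y_{-i^\star}$ or $x_{-i^\star}$. The paper uses the lossless chain-rule identity $k=H(XY)=\sum_{i} H(X_iY_i\mid X_{<i}Y_{>i})$ (independence of $X$ and $Y$ lets one slide $Y_{>i}$ into the conditioning of $X_i$ and $X_{<i}X_i$ into that of $Y_i$), picks $i$ and a fixing of only the prefix $x_{<i}$ and the suffix $y_{>i}$ with $H(X_iY_i\mid x_{<i}y_{>i})\ge k/n$, and then absorbs the influence of the still-random coordinates $X_{>i}$ and $Y_{<i}$ on the parity by additionally conditioning on the two bits $U=\bigoplus_{j<i}f(x_j,Y_j)$ and $V=\bigoplus_{j>i}f(X_j,y_j)$; this costs exactly the additive $2$ in the exponent and keeps the support of $(X_i,Y_i)$ a product set. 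In the example above this recovers the full rectangle $\mathbb{Z}_m\times\mathbb{Z}_m$. In short: your step two should use the exact chain-rule decomposition with staggered conditioning rather than marginal subadditivity, and your step three should fix only those staggered coordinates plus the two parity bits.
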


The above theorem allows us to use a monochromatic rectangle of large density in $f^{\oplus n}$ to find a monochromatic rectangle of even larger density in $f$. Combined with some reasoning about the rank of the function, we are able to use \Cref{thm:subadditivity} to obtain a deterministic protocol that proves \Cref{main}. In the rest of this paper, we give the details of the proofs of these two theorems.

\section{Preliminaries and Notation}
For a variable $X=X_1,\dotsc, X_n$, we write $X_{<i}$ to denote $X_1,\dotsc, X_{i-1}$. We define $X_{>i}$ similarly. All logarithms are taken base $2$. We recall some basic definitions regarding entropy of random variables. Let $A$ be a random variable distributed according to $p(a)$. The entropy of $A$ is defined as
\[H(A) := \E_{p(a)}\bigg[\log\frac{1}{p(a)}\bigg].\]

\begin{proposition}\label{fact:entropy-upper-bound}
        For any random variable $A$ with finite support, we have $H(A) \leq \log |\mathsf{supp}(A)|$, with equality if $A$ is distributed according to the uniform distribution. 
\end{proposition}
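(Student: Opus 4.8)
The plan is to derive the bound from the concavity of the logarithm via Jensen's inequality. Write $S = \supp(A)$ and let $p(\cdot)$ be the distribution of $A$, so that by definition $H(A) = \sum_{a \in S} p(a)\log\frac{1}{p(a)}$, where the sum ranges only over those $a$ with $p(a) > 0$ (with the usual convention $0\log\frac{1}{0} = 0$, which makes the terms outside $S$ vanish). I would then regard this sum as the expectation, under $p$, of the random quantity $\log\frac{1}{p(A)}$, and apply Jensen's inequality: since $t \mapsto \log t$ is concave, $\E_{p(a)}\!\left[\log\frac{1}{p(a)}\right] \le \log\,\E_{p(a)}\!\left[\frac{1}{p(a)}\right]$.

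The next step is simply to evaluate the inner expectation. Because $S$ is finite, $\E_{p(a)}\!\left[\frac{1}{p(a)}\right] = \sum_{a \in S} p(a)\cdot\frac{1}{p(a)} = \sum_{a \in S} 1 = |S|$, where finiteness of $S$ is what guarantees this is a finite quantity. Substituting back gives $H(A) \le \log |S| = \log|\supp(A)|$, which is the claimed inequality.

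For the equality clause, the cleanest route is a direct computation rather than chasing the equality condition in Jensen: if $A$ is uniform on $S$ then $p(a) = 1/|S|$ for every $a \in S$, so $H(A) = \sum_{a \in S}\frac{1}{|S|}\log|S| = \log|S|$. (If one wanted the converse as well, one would note that the strict concavity of $\log$ forces equality in Jensen only when $\frac{1}{p(A)}$ is almost surely constant, i.e. $p$ uniform on $S$ — but only the stated direction is needed.) I do not anticipate any real obstacle here; the only points requiring a little care are the support conventions that keep every term well-defined, and the explicit use of $|S| < \infty$ to collapse $\sum_{a \in S} 1$ to $|S|$.
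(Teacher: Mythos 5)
Your proof is correct: the Jensen's-inequality argument for the upper bound and the direct computation for the uniform case are exactly the standard derivation of this fact, which the paper states without proof as a basic preliminary. No gaps — the support convention and the use of finiteness of $|\supp(A)|$ are handled appropriately.
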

    
If $A$ and $B$ are two jointly distributed random variables distributed according to $p(ab)$ then the entropy of $A$ conditioned on $B$ is defined as
\[H(A\vert B) := \E_{p(a,b)}\bigg[\log\frac{1}{p(a\vert b)}\bigg].\]

The entropy of jointly distributed random variables satisfy the chain rule:
\[H(A,B) = H(A) + H(B\vert A).\]
Additionally, it is known that the conditional entropy of a random variable cannot exceed its entropy.

\begin{lemma}\label{lem:conditioning}
    For any two jointly distributed random variables,  $A,B$, we have $H(A\vert B) \leq H(A)$, with equality if $A$ and $B$ are independent.
\end{lemma}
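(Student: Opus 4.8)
The plan is to prove the equivalent inequality $I(A;B) := H(A) - H(A\vert B) \ge 0$ (Gibbs' inequality in disguise) and then to read off the equality condition from the single elementary estimate used in the argument.

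First I would collapse $H(A) - H(A\vert B)$ into one expectation over the joint law $p(a,b)$. Since $\E_{p(a,b)}[\log(1/p(a))] = \E_{p(a)}[\log(1/p(a))] = H(A)$ (integrating out $b$ does not affect a function of $a$ alone), and using $p(a\vert b) = p(a,b)/p(b)$ on the support of $p(b)$, which carries all the mass, we get
\[
H(A) - H(A\vert B) = \E_{p(a,b)}\!\left[\log\frac{p(a\vert b)}{p(a)}\right] = \E_{p(a,b)}\!\left[\log\frac{p(a,b)}{p(a)\,p(b)}\right].
\]
Next I would invoke the inequality $\log t \le (t-1)\log e$, valid for every $t>0$ with equality iff $t=1$ (this is just the base-$2$ form of $\ln t \le t-1$; one could equivalently cite concavity of $\log$ and Jensen). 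Applying it with $t = \dfrac{p(a)\,p(b)}{p(a,b)}$ at each point $(a,b)$ in $\supp(p(a,b))$ gives $-\log\dfrac{p(a,b)}{p(a)p(b)} = \log t \le (t-1)\log e$, and taking expectation under $p(a,b)$ the right-hand side becomes $\log e \cdot \big(\sum_{(a,b)\in\supp(p(a,b))} p(a)p(b) - 1\big) \le 0$, because $\sum_{a,b} p(a)p(b) = 1$. Hence $H(A) - H(A\vert B) \ge 0$.

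For the equality case: if $A$ and $B$ are independent then $p(a\vert b) = p(a)$ on the support, so the integrand above is identically zero and $H(A\vert B) = H(A)$. Conversely, equality forces both steps to be tight simultaneously — the pointwise bound $\log t \le (t-1)\log e$ must be an equality at every point of $\supp(p(a,b))$, so $p(a,b) = p(a)p(b)$ wherever $p(a,b) > 0$; and $\sum_{(a,b)\in\supp(p(a,b))} p(a)p(b) = 1$, so $p(a)p(b) = 0$ wherever $p(a,b) = 0$. Together these give $p(a,b) = p(a)p(b)$ for all $(a,b)$, i.e. independence.

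The only real care needed — and the step I would watch most closely — is the bookkeeping around supports and division by zero: every sum defining $H(A)$, $H(A\vert B)$, or the substitution $t = p(a)p(b)/p(a,b)$ ranges only over points where the relevant denominators are positive, and the equality analysis must separately handle the possibility that the product measure $p(a)p(b)$ is positive somewhere the joint $p(a,b)$ vanishes (ruled out by tightness of $\sum p(a)p(b) = 1$) versus the reverse (impossible, since $\supp(p(a,b))$ already lies inside $\supp(p(a))\times\supp(p(b))$). Beyond that, the argument is just the standard concavity-of-$\log$ computation.
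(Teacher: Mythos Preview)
Your proof is correct; it is the standard Gibbs/log-sum argument for nonnegativity of mutual information, and your handling of supports and the equality case is careful and complete. Note that the paper does not actually prove this lemma: it is listed in the preliminaries as a known fact with no proof provided, so there is no ``paper's own proof'' to compare against. One incidental remark: the lemma as stated only asserts equality \emph{if} $A$ and $B$ are independent, whereas you go on to establish the converse as well; that is harmless extra content.
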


We need the following basic fact about rank:
\begin{proposition}\label{fact:rank-subadd}
For any two matrices $A_1$ and $A_2$, we have
$\rk(A_1+A_2) \leq \rk(A_1) + \rk(A_2)$.
\end{proposition}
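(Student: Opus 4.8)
The plan is to prove the bound through column spaces. First I would recall the standard fact that the rank of a matrix equals the dimension of its column space, the span of its columns viewed as vectors; I will write $\mathrm{col}(A)$ for this subspace. Note the statement implicitly assumes $A_1$ and $A_2$ have the same shape, so that $A_1 + A_2$ is defined and all three column spaces live in a common ambient space over a fixed field (for this paper, $\R$, since $f$ is represented by a $\pm 1$ matrix).

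Next comes the one substantive observation: the $j$-th column of $A_1 + A_2$ is the sum of the $j$-th columns of $A_1$ and $A_2$, hence lies in $\mathrm{col}(A_1) + \mathrm{col}(A_2)$. Running over all $j$ gives $\mathrm{col}(A_1 + A_2) \subseteq \mathrm{col}(A_1) + \mathrm{col}(A_2)$. Taking dimensions and applying the elementary inequality $\dim(U + V) \le \dim U + \dim V$ for subspaces $U, V$ (which itself follows from the subspace dimension formula $\dim(U+V) = \dim U + \dim V - \dim(U \cap V) \le \dim U + \dim V$), I get
\[ \rk(A_1 + A_2) = \dim \mathrm{col}(A_1 + A_2) \le \dim\big(\mathrm{col}(A_1) + \mathrm{col}(A_2)\big) \le \rk(A_1) + \rk(A_2). \]

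There is essentially no obstacle; this is a textbook linear-algebra fact. The only points needing any care are the implicit conformability of the two matrices for addition and working over a single fixed field throughout, both of which hold in the only context where the paper invokes this proposition (where the $A_i$ are $\pm 1$ matrices associated with Boolean functions).
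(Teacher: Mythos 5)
Your proof is correct; the paper states this proposition as a standard fact without proof, and your column-space argument ($\mathrm{col}(A_1+A_2) \subseteq \mathrm{col}(A_1)+\mathrm{col}(A_2)$, then $\dim(U+V)\le \dim U + \dim V$) is exactly the textbook justification one would supply. Your remarks on conformability and the fixed field are fine but not needed beyond a word.
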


We need the following lemma that shows that a protocol with a small number of leaves can be computed by a protocol with small communication (see \cite{RaoYehudayoff-text}, Theorem 1.7).
\begin{lemma}\label{lem:balancing}
    If $\pi$ is a deterministic protocol with $\ell$ leaves, there exists a deterministic protocol computing $\pi(x,y)$ with communication at most $\lceil2\log_{3/2} \ell\rceil$. 
\end{lemma}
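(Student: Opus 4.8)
The plan is to carry out the classical ``heavy-path'' rebalancing argument (this is essentially the proof of Theorem~1.7 in \cite{RaoYehudayoff-text}). Let $T$ be the protocol tree of $\pi$, with $\ell$ leaves, and for a node $v$ write $T_v$ for the subtree rooted at $v$ and $\ell_v$ for its number of leaves. The first step is to locate a \emph{balanced} node: if $\ell \geq 2$ then there is a node $v$ with $\ell/3 < \ell_v \leq 2\ell/3$. To see this, start at the root, where $\ell_v = \ell > 2\ell/3$, and repeatedly walk to the child with the larger leaf count; this count is at least halved at each step and eventually reaches $1$ at a leaf, so let $v$ be the first node on the walk with $\ell_v \leq 2\ell/3$. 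Its parent $u$ satisfies $\ell_u > 2\ell/3$, and since $v$ is the heavier child of $u$ we get $\ell_v \geq \ell_u/2 > \ell/3$.

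Next I would use the standard structural fact that the set of inputs $(x,y)$ that reach a fixed node $v$ in $T$ is a combinatorial rectangle $A_v \times B_v$ with $A_v \subseteq \calX$ and $B_v \subseteq \calY$. Thus $2$ bits of communication suffice to determine whether $(x,y)$ reaches $v$: Alice announces whether $x \in A_v$ and Bob announces whether $y \in B_v$. After this exchange the players recurse. If $(x,y)$ reaches $v$, they simulate $T$ starting from $v$; this is a legal protocol on $A_v \times B_v$ that outputs $\pi(x,y)$, and its tree $T_v$ has $\ell_v \leq 2\ell/3$ leaves. If $(x,y)$ does not reach $v$, they run the protocol $T'$ obtained from $T$ by deleting $T_v$ except for $v$ itself, which becomes a dummy leaf; on every input that misses $v$ this protocol behaves exactly like $T$, hence outputs $\pi(x,y)$, and its number of leaves is $\ell - \ell_v + 1 \leq \ell - \lfloor \ell/3 \rfloor = \ceil{2\ell/3}$ (using that $\ell_v$ is an integer strictly larger than $\ell/3$). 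Letting $C(\ell)$ denote the communication of the protocol built this way from a tree with $\ell$ leaves, this yields the recursion $C(\ell) \leq 2 + C(\ceil{2\ell/3})$ for $\ell \geq 3$, with base cases $C(1) = 0$ and $C(2) = 1$.

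Unrolling the recursion, the leaf count shrinks by a factor of essentially $2/3$ per round, so it reaches a base case after $O(\log_{3/2}\ell)$ rounds, each costing $2$ bits; tracking the rounding carefully gives the claimed bound $\ceil{2\log_{3/2}\ell}$. The only parts that require any care — and hence the main (modest) obstacle — are the endgame of this accounting: propagating the ceilings through the recursion so that the final constant is exactly $2$ in front of $\log_{3/2}\ell$ and handling the small values of $\ell$ correctly, together with the routine but necessary check that the collapsed tree $T'$ is a genuine deterministic protocol computing $\pi$ on all inputs that do not reach $v$.
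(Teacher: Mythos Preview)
Your proposal is correct and is exactly the standard heavy-path rebalancing argument; the paper itself does not prove this lemma but simply cites it as Theorem~1.7 of \cite{RaoYehudayoff-text}, and what you have written is precisely that proof.
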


\section{Proof of \Cref{thm:subadditivity}}

Let $R$ be a monochromatic rectangle for $f^{\oplus n}$ of size $2^k$, and let $(X,Y) \in R$ be uniformly random. Because $R$ is a rectangle, $X$ and $Y$ are  independent. Using the chain rule, we get
\begin{align*}
    k= H(XY) &= H(X) + H(Y) \tag{because $X,Y$ are independent}\\
    &= \sum_{i=1}^n  H(X_i\vert X_{< i}) + H(Y_i\vert Y_{> i}) \tag{by the chain rule}\\
    &= \sum_{i=1}^n  H(X_i\vert X_{< i} Y_{> i}) + H(Y_i\vert X_{<i}Y_{> i} X_i) \tag{because $X,Y$ are independent}\\
    &= \sum_{i=1}^n  H(X_iY_i\vert X_{< i} Y_{> i}) \tag{by the chain rule}.
\end{align*}
This implies there exist $i,x_{<i}, y_{>i}$ such that 
\begin{align*}
    H(X_i Y_i \vert x_{<i} y_{>i}) \geq k/n.
\end{align*}
Define the random variables  $U=f(x_1,Y_1)\oplus\ldots \oplus f(x_{i-1},Y_{i-1})$ and $V=f(X_{i+1},y_{i+1})\oplus\ldots \oplus f(X_{n},y_{n})$. By the chain rule, and since $U,V$ are bits, we get
\begin{align*}
 H(X_i Y_i \vert x_{<i} y_{>i} U V)+2 &\geq
    H(X_i Y_i \vert x_{<i} y_{>i} U V)+ H(UV \vert x_{<i} y_{>i})\\ &= H(X_i Y_i UV\vert x_{<i} y_{>i} )\\&\geq H(X_i Y_i \vert x_{<i} y_{>i} ) \\ &\geq  k/n,
\end{align*}
so there is some fixed value of $u,v$ such that
\begin{align*}
    H(X_i Y_i \vert x_{<i} y_{>i} u v) \geq  k/n-2.
\end{align*}

The desired rectangle is the support of $(X_i,Y_i)$ conditioned on this fixed value of $(x_{<i},y_{>i}, u,v)$, which we call $T$. Because $(X,Y)$ is distributed uniformly in $R$, the distribution of $(X_i,Y_i)$ conditioned on $(x_{<i},y_{>i}, u,v)$ is a product distribution, and so $T$ is a rectangle. By \Cref{fact:entropy-upper-bound}, $|T| \geq 2^{k/n-2}$.  Because each input $(x_i,y_i) \in T$ corresponds to some input $(x,y)\in R$ with $f^{\oplus n}(x,y)$ fixed, and we have fixed $x_{< i},y_{> i}$ and the xor of the function value in the first $i-1$ as well as the last $n-i$ coordinates, $f(x_i,y_i)$ is determined within $T$, and $T$ is a monochromatic rectangle of $f$.

\section{Proof of \Cref{main}}
The proof uses \Cref{thm:subadditivity} and standard ideas along the lines of \cite{NW} to obtain a protocol for $f$. 
We shall prove that  $f$ has a protocol tree whose number of leaves is bounded by 
\begin{align} \label{leaves}
2^{O((\log C(f^{\oplus n})^{1/n}+ \log \rk(f))\cdot \log \rk(f))}
\end{align}
By applying 
\Cref{lem:balancing} to this protocol, we obtain a protocol with communication 
\[O\bigg((\log C(f^{\oplus n})^{1/n} +\log \rk(f)) \log \rk(f)\bigg) \geq D(f),\] which proves that
$$\log C(f^{\oplus n})^{1/n} \geq \frac{\Omega(D(f))}{\log \rk(f)} - \log \rk(f),$$ yielding the theorem. 

We prove the bound by induction on $|\mathcal{X}| \cdot |\mathcal{Y}|$ and $\rk(f)$. If $\rk(f) < 5$, or $|\mathcal{X}| \cdot |\mathcal{Y}|\leq 1$, we obtain a protocol with a constant number of leaves. Otherwise, by averaging, $f^{\oplus n}$ has a monochromatic rectangle of size  $$\frac{|\calX|^n\cdot |\calY|^n} { C(f^{\oplus n})}.$$
    \Cref{thm:subadditivity} then implies that $f$ contains a monochromatic rectangle $R$  of size at least $$\frac{ |\calX|\cdot |\calY|}{ 4 \cdot C(f^{\oplus n})^{1/n}} .$$
    We can use $R$ to partition the matrix corresponding to $f$ as follows
    \[\begin{bmatrix}
R & A \\
B & Z
\end{bmatrix}.\]
Since $R$ has rank  $1$, we have
\begin{align*}
    \rk(f) &\geq \rk\left(\begin{bmatrix}
0 & A \\
B & Z
\end{bmatrix}\right) - 1 \tag{\Cref{fact:rank-subadd}}  \\
&\geq \rk\left(\begin{bmatrix}
0 & A
\end{bmatrix}\right) + \rk\left(\begin{bmatrix}
0 \\
B
\end{bmatrix}\right)  - 1 \tag{by Gaussian elimination}\\
&\geq \rk\left(\begin{bmatrix}
R & A
\end{bmatrix}\right) + \rk\left(\begin{bmatrix}
R \\
B
\end{bmatrix}\right)  - 3. \tag{\Cref{fact:rank-subadd}} 
\end{align*}
So, we must have either \begin{align} \label{alicesends}
    \rk(\begin{bmatrix}
R & A
\end{bmatrix}) \leq (\rk(f)+3)/2,
\end{align} or
\begin{align*} 
    \rk \Big(\begin{bmatrix}
R \\ B
\end{bmatrix}\Big ) \leq (\rk(f)+3)/2.
\end{align*}
If  \Cref{alicesends} holds, Alice sends a bit to Bob indicating whether her input is consistent with $R$. Otherwise, Bob sends a bit indicating whether his input is consistent with $R$. Without loss of generality, assume that \Cref{alicesends} holds.

Let $f_0$ and $f_1$ denote the sub-functions of $f$ obtained by restricting to $\begin{bmatrix}
R & A
\end{bmatrix}$ and $\begin{bmatrix}
B & Z
\end{bmatrix}$ respectively. Since every rectangle cover of $f^{\oplus n}$ yields a rectangle cover of $f_0^{\oplus n}$ and a rectangle cover of $f_1^{\oplus n}$, we have $$\max\{C(f_0^{\oplus n}), C(f_1^{\oplus n})\} \leq C(f^{\oplus n}).$$


If Alice's input is consistent with $R$, 
we may repeat the argument with the function $f_0$ which satisfies $\rk(f_0) \leq (\rk(f) + 3)/2 \leq 4\rk(f)/5$, so long as $\rk(f) \geq 5$. Otherwise, if Alice's input is inconsistent with $R$,  
we repeat the argument with the function $f_1$ which has at most $$|\calX|\cdot |\calY|\cdot \Big(1 - \frac{1}{4 \cdot C(f^{\oplus n})^{1/n}}\Big )$$ inputs.

The number of recursive steps where the rank reduces by a factor of $4/5$ is at most $O(\log \rk(f))$. Moreover, since the  matrix corresponding to $f$ has at most $2^{\rk(f)}$ distinct rows and columns, the number of steps where the input space shrinks by a factor of $(1 - \frac{1}{4 \cdot C(f^{\oplus n})^{1/n}})$ is at most $8\cdot \rk(f)\cdot C(f^{\oplus n})^{1/n} $. That is because after so many steps the number of inputs is at most \[2^{2\cdot \rk(f)}\cdot \Big (1 - \frac{1}{4 \cdot C(f^{\oplus n})^{1/n}}\Big )^{8\rk(f)\cdot C(f^{\oplus n})^{1/n}} \leq 2^{2\rk(f)}\cdot e^{-2\rk(f)}< 1.\] 

The number of leaves in the protocol we have designed is at most \begin{align*}
    \binom{8\cdot \rk(f)\cdot C(f^{\oplus n})^{1/n} + O(\log \rk(f))}{O(\log \rk(f))} 
    &\leq 2^{O((\log C(f^{\oplus n})^{1/n}+ \log \rk(f)) \log \rk(f))},\end{align*}
    since $C(f^{\oplus n}) \geq 1$. This proves \Cref{leaves}.

\bibliographystyle{alpha}

\bibliography{ref}

\end{document}